\newsavebox{\@brx}
\newcommand{\llangle}[1][]{\savebox{\@brx}{\(\m@th{#1\langle}\)}%
  \mathopen{\copy\@brx\kern-0.5\wd\@brx\usebox{\@brx}}}
\newcommand{\rrangle}[1][]{\savebox{\@brx}{\(\m@th{#1\rangle}\)}%
  \mathclose{\copy\@brx\kern-0.5\wd\@brx\usebox{\@brx}}}
\declaretheorem[style=definition,name=Definition,qed=$\blacksquare$]{definition}
\declaretheorem[style=definition,name=Example,qed=$\blacksquare$]{example}
\declaretheorem[style=definition,name=Remark,qed=$\blacksquare$]{remark}
\declaretheorem[style=plain,name=Proposition,qed=$\blacksquare$]{proposition}
\declaretheorem[style=plain,name=Theorem,qed=$\blacksquare$]{theorem}
\newcommand{\R}{\mathbb{R}}
\newcommand{\eR}{\overline{\R}}
\newcommand{\X}{\mathcal{X}}
\newcommand{\Y}{\mathcal{Y}}
\newcommand{\W}{\mathcal{W}}
\newcommand{\rect}[1]{\llbracket #1 \rrbracket}
\title{\LARGE {\bf Improving the Fidelity of Mixed-Monotone Reachable Set Approximations via State Transformations}}
\author{Matthew Abate and Samuel Coogan
\thanks{ This work was supported in part by the National Science Foundation under grant \#1749357 and the Air Force Office of Scientific Research under award FA9550-19-1-0015.}
\thanks{M. Abate is with the School of Mechanical Engineering and the School of Electrical and Computer Engineering, Georgia Institute of Technology, Atlanta, 30332, USA {\tt\small Matt.Abate@GaTech.edu}.}
\thanks{S. Coogan is with the School of Electrical and Computer Engineering and the School of Civil and Environmental Engineering, Georgia Institute of Technology, Atlanta, 30332, USA {\tt\small Sam.Coogan@GaTech.edu}.}
}
\begin{document}

\maketitle
\thispagestyle{empty}
\pagestyle{empty}

\begin{abstract}
Mixed-monotone systems are separable via a decomposition function into increasing and decreasing components, and this decomposition function allows for embedding the system dynamics in a higher-order monotone embedding system.  Embedding the system dynamics in this way facilitates the efficient over-approximation of reachable sets with hyperrectangles, however, unlike the monotonicity property, which can be applied to compute, \emph{e.g.}, the \emph{tightest} hyperrectangle containing a reachable set, the application of the mixed-monotonicity property generally results in conservative reachable set approximations.
In this work, explore conservatism in the method and we consider, in particular, embedding systems that are monotone with respect to an alternative partial order.  This alternate embedding system is constructed with a decomposition function for a related system, formed via a linear transformation of the initial state-space. We show how these alternate embedding systems allow for computing reachable sets with improved fidelity, \emph{i.e.}, reduced conservatism.
\end{abstract}

\section{Introduction}
A dynamical system is mixed-monotone if there exists a related \textit{decomposition function} that decomposes the system's vector field into increasing and decreasing components; mixed-monotonicity applies to continuous-time systems \cite{Gouze:1994qy, Coogan:2016, Nonmonotone, SuffMM}, discrete-time systems \cite{Smith2006}, as well as systems with disturbances \cite{TIRA, LTLandMM, coogan2015efficient}, and it generalizes the \emph{monotonicity} property of dynamical systems for which trajectories maintain a partial order over states \cite{smith2008monotone, monotonicity}.

For an $n$-dimensional mixed-monotone system with a disturbance input, it is possible to construct a $2n$-dimensional \emph{embedding system} from the decomposition function.  This embedding system contains no disturbances and it is monotone with respect to a particular partial order.  Thus, tools from monotone systems theory can be applied to the embedding system to conclude properties of the original dynamics; in particular, such approaches are useful to efficiently approximate reachable sets using hyperrectangles. For example, it is shown in \cite{coogan2015efficient, TIRA, LTLandMM} how finite-time forward reachable sets for the original system are efficiently approximated via a single simulation of the embedding system, and this procedure is extended in \cite{Invariance4MM} for the approximation of backward-time reachable sets.  These works assume a hyperrectangular initial set of interest, and the approximations derived from their procedures are also hyperrectangles.

Unlike the monotonicity property, which can be applied to compute, \emph{e.g.}, the \emph{tightest} hyperrectangle containing a reachable set \cite{smith2008monotone}, the application of the mixed-monotonicity property is known to generally result in conservative reachable set approximations \cite{coogan2015efficient, TIRA, LTLandMM, Invariance4MM}.  In this work, we explore two main ways of reducing the conservatism in the approximation of reachable sets: (i) using alternative and/or multiple decomposition functions, and (ii) using alternative and/or multiple partial orders.

The first topic was recently explored in continuous-time \cite{abate2020tight} and it is now known that all systems are mixed-monotone with a unique \emph{tight} decomposition function that computes reachable sets with less conservatism than any other decomposition function. This tight construction is defined as an optimization problem and may not always be practically computable.
Thus, in some instances, employing a different decomposition function construction may be preferable; see  \cite{7799445, SuffMM, TIRA, LTLandMM} for an algorithm to generate decomposition functions for systems with uniformly bounded Jacobian matrices, and see also \cite{Invariance4MM} for an algorithm to generate decomposition functions for systems defined by polynomial vector fields.  Our first result is to show how two initial decomposition functions for a given system can be combined in a piecewise fashion to create a new decomposition function for the same system that approximates reachable sets with accuracy at least as good as employing both initial decomposition functions independently and forming a reachable set approximation as the intersection of the approximation derived from each function.  
This method for reducing conservatism is particularly useful when both initial decomposition functions are derived using the Jacobian bound approach from \cite{7799445, SuffMM, TIRA, LTLandMM}; this approach can produce multiple distinct decomposition functions for the same system and combining these functions allows for added fidelity.

The main results of this paper, however, deal with the second topic regarding alternative partial orders. In particular, we consider the standard componentwise partial orders in a linearly transformed state-space, and we observe that inequality intervals in the transformed space correspond to parallelotopes in the original state-space. Thus, it is possible to compute parallelotope over-approximations of reachable sets by applying the standard mixed-monotonicity tools with the new order. We present two methods for reducing conservatism in this manner: (i) several different partial orders can be used so that the reachable set of the system is known to lie in the intersection of the approximation derived from each partial order, (ii) in certain cases, a linear transformation can be found to transform the system to a monotone system.  Moreover, as a tight decomposition function is known to exist for any given transformed system, there exists an analogous notion of tightness with respect to any given parallelotope shape. 

The results and tools created in this work are demonstrated through three examples and a case study\footnote{The code that accompanies these examples and generates the figures in this work is publicly available through the GeorgiaTech FactsLab GitHub: \url{https://github.com/gtfactslab/Abate\_ACC2021\_2}.}.

\section{Notation}

Let $(x,\, y)$ denote the vector concatenation of $x,\, y \in \R^n$, \emph{i.e.}, $(x,\,y) := [x^T \, y^T]^T \in \R^{2n}$, and let $\preceq$ denote the componentwise vector order, \emph{i.e.}, $x\preceq y$ if and only if $x_i \leq y_i$ for all $i\in\{1,\cdots, n\}$ where vector components are indexed via subscript. We say that $x,\, y \in \R^n$ are ordered when either $x \preceq y$ or $y \preceq x$.

Given $x, y\in\mathbb{R}^n$ with $x\preceq y$, 
\begin{equation*}
[x,\,y]:=\left\{z\in \R^n \,\mid\, x \preceq z \text{ and } z \preceq y\right \}
\end{equation*}
denotes the hyperrectangle defined by the endpoints $x$ and $y$, and given a nonsingular transformation matrix $T\in \R^{n\times n}$,
\begin{equation*}
[x,\,y]_T :=\left\{z\in \R^n \,\mid\, T^{-1} z \in [x,\, y] \right \}
\end{equation*}
denotes the parallelotope defined by the endpoints $x$ and $y$ and shape matrix $T$. 
Given $a=(x,\,y) \in \R^{2n}$ with $x \preceq y$, we denote by $\rect{a}$ the hyperrectangle formed by the first and last $n$ components of $a$, \emph{i.e.},  $\rect{a}:=[x,\,y]$, and likewise $\rect{a}_T:=[x,\,y]_T$.

Let $\preceq_{\rm SE}$ denote the \emph{southeast order} on $\eR^{2n}$ defined by
\begin{equation*}
(x,\, x') \preceq_{\rm SE} (y,\, y')
 \:\: \Leftrightarrow  \:\:  x \preceq y\text{ and } y' \preceq x'
\end{equation*}
where $x,\, y,\, x',\, y' \in \eR^n$.  In the case that $x \preceq x'$ and $y \preceq y'$, observe that
\begin{equation}
\label{eq:order_to_box}
(x,\, x') \preceq_{\rm SE} (y,\, y')
 \:\: \Leftrightarrow  \:\:
[\,y,\, y'\,] \subseteq [\,x,\, x'\,].
\end{equation}

\section{Preliminaries}
We consider a dynamical system with disturbances
\begin{equation}\label{sys}
    \dot{x} = F(x,\, w)
\end{equation}
with state $x \in \X \subseteq \R^n$ and disturbance input $w \in \W \subset \R^m$, where $\W = [\underline{w},\, \overline{w}]$ for some $\underline{w} \preceq \overline{w}$.

Let $\Phi(t;\, x,\, \mathbf{w}) \in \X$ denote the unique state of \eqref{sys} reached at time $t$ when starting from state $x$ at time $0$ and evolving subject to the piecewise continuous signal $\mathbf{w} : [0,\, t] \rightarrow \W$.  We allow for finite-time escape so that $\Phi(t;\, x,\, \mathbf{w})$ need not exist for all $t$, however, $\Phi(t;\, x,\, \mathbf{w})$ is understood to exist only when $\Phi(\tau;\, x,\, \mathbf{w}) \in \X$ for all $\tau \in [0,\, t]$, and statements involving $\Phi(t;\, x,\, \mathbf{w})$ are understood to apply only when $\Phi(t;\, x,\, \mathbf{w})$ exists.
For given $\X_0 \subseteq \X$ and $t \geq 0$, we denote by $R(t;\, \X_0)$ the time-$t$ forward reachable set of \eqref{sys} from $\X_0$, that is,
\begin{multline}\label{forward_reach}
    R(t;\, \X_0) = \{\Phi(t;\, x,\, \mathbf{w})\in \X \:\vert\: 
    x \in \X_0,\\ \mathbf{w} : [0,\, t] \rightarrow \W\}.
\end{multline}

We begin by recalling fundamental results in mixed-monotone systems theory. 

\begin{definition}[Mixed-Monotonicity]\label{def1}
Given a locally Lipschitz continuous function $d : \X \times \W \times \X \times \W \rightarrow \R^n$, the system \eqref{sys} is \textit{mixed-monotone with respect to $d$}  if
\begin{enumerate}
    \item For all $x \in \X$ and all $w \in \W$, $d(x,\,w,\, x,\, w) = F(x,\, w)$ holds.
    \item For all $i,\, j \in \{1,\, \cdots,\, n\}$, with $i \neq j$, $\frac{\partial d_i}{\partial x_j}(x, \,w,\, \widehat{x},\, \widehat{w}) \geq 0$ holds for all ordered $(x,\,w),\, (\widehat{x},\,\widehat{w}) \in \X\times \W$ such that $\frac{\partial d}{\partial x}$ exists.
    \item For all $i,\, j \in \{1,\, \cdots,\, n\}$, $\frac{\partial d_i}{\partial  \widehat{x}_j}(x, \,w,\, \widehat{x},\, \widehat{w}) \leq 0$
    holds for all ordered $(x,\,w),\, (\widehat{x},\,\widehat{w}) \in \X\times \W$ such that $\frac{\partial d}{\partial \widehat{x}}$ exists.
    \item For all $i\in \{1,\, \cdots,\, n\}$ and all $j\in \{1,\, \cdots,\, m\}$, $\frac{\partial d_i}{\partial  w_j}(x, \,w,\, \widehat{x},\, \widehat{w}) 
    \geq 0 \geq 
    \frac{\partial d_i}{\partial  \widehat{w}_j}(x, \,w,\, \widehat{x},\, \widehat{w})$
    holds for all ordered $(x,\,w),\, (\widehat{x},\,\widehat{w}) \in \X\times \W$ such that $\frac{\partial d}{\partial w}$ and $\frac{\partial d}{\partial \widehat{w}}$ exist.
    \qedhere
\end{enumerate}
\end{definition}

When \eqref{sys} is mixed-monotone with respect to $d$, $d$ is said to be a decomposition function for \eqref{sys}.  Given $d$, the system 
\begin{equation}\label{embedding}
    \begin{bmatrix}
    \dot{x} \\ \dot{\widehat{x}}
    \end{bmatrix}
    = E(x,\, \widehat{x}) = 
    \begin{bmatrix}
    d(x,\, \underline{w},\, \widehat{x},\, \overline{w}) \\
    d(\widehat{x},\, \overline{w},\, x,\, \underline{w})
    \end{bmatrix}
\end{equation}
is the \emph{embedding system relative to $d$}, and $E$ is the \emph{embedding function relative to $d$}. We let $\Phi^{E}(t;\, a)$ denote the unique state of \eqref{embedding} reached at time $t \geq 0 $ when beginning from state $a \in \X \times \X$ at time $0$.

We show in the following Proposition how approximations of reachable sets for \eqref{sys} are efficiently computed via a single simulation of the embedding system \eqref{embedding}. 

\begin{proposition}\label{prop:p1} 
Let \eqref{sys} be mixed-monotone with respect to $d$, and let $\X_0 = [\underline{x},\, \overline{x}]$ for some $\underline{x},\, \overline{x}\in \X$ with $\underline{x} \preceq \overline{x}$. Then $R(t;\, \X_0) \subseteq \rect{\Phi^{E}( t;\, (\underline{x},\, \overline{x}))}.$
\end{proposition}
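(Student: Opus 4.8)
The plan is to reduce the set containment to a single-trajectory comparison and then propagate the relevant inequalities by a differential (Kamke-type) argument. Fix an initial state $x_0 \in \X_0$ and a piecewise continuous disturbance $\mathbf{w} : [0,\,t] \to \W$, and write $(\underline{y}(\tau),\, \overline{y}(\tau)) := \Phi^{E}(\tau;\, (\underline{x},\, \overline{x}))$ for the two halves of the embedding trajectory. Since every point of $R(t;\, \X_0)$ has the form $\Phi(t;\, x_0,\, \mathbf{w})$, and since $\rect{\Phi^{E}(t;\, (\underline{x},\, \overline{x}))} = [\underline{y}(t),\, \overline{y}(t)]$ by definition of $\rect{\cdot}$, it suffices to prove that $\underline{y}(\tau) \preceq \Phi(\tau;\, x_0,\, \mathbf{w}) \preceq \overline{y}(\tau)$ for all $\tau \in [0,\,t]$. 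At $\tau = 0$ this is immediate, because $x_0 \in [\underline{x},\, \overline{x}]$ gives $\underline{y}(0) = \underline{x} \preceq x_0 \preceq \overline{x} = \overline{y}(0)$.

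Writing $z(\tau) := \Phi(\tau;\, x_0,\, \mathbf{w})$, property (1) of Definition~\ref{def1} gives $\dot{z} = F(z,\, \mathbf{w}) = d(z,\, \mathbf{w},\, z,\, \mathbf{w})$, while the embedding halves satisfy $\dot{\overline{y}} = d(\overline{y},\, \overline{w},\, \underline{y},\, \underline{w})$ and $\dot{\underline{y}} = d(\underline{y},\, \underline{w},\, \overline{y},\, \overline{w})$. To keep the ordering alive I would verify the touching-point condition underlying the comparison theorem for quasimonotone systems: assuming the ordering $\underline{y} \preceq z \preceq \overline{y}$ holds on $[0,\,\tau^*]$ and that at $\tau^*$ the $i$-th upper inequality is active, i.e. $z_i(\tau^*) = \overline{y}_i(\tau^*)$, I want to show $\dot{z}_i(\tau^*) \leq \dot{\overline{y}}_i(\tau^*)$, and symmetrically $\dot{\underline{y}}_i(\tau^*) \leq \dot{z}_i(\tau^*)$ whenever a lower inequality becomes active.

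The touching-point inequality is obtained by transforming $d_i(z,\, \mathbf{w},\, z,\, \mathbf{w})$ into $d_i(\overline{y},\, \overline{w},\, \underline{y},\, \underline{w})$ one argument block at a time and checking that each change does not decrease the value. Raising the first state argument from $z$ to $\overline{y}$ does not decrease $d_i$ by property (2), since for $j \neq i$ we have $z_j \leq \overline{y}_j$ while the active $i$-th component is held fixed at $z_i = \overline{y}_i$; raising the first disturbance from $\mathbf{w}$ to $\overline{w}$ does not decrease $d_i$ by the first half of property (4) together with $\mathbf{w} \preceq \overline{w}$; lowering the second state argument from $z$ to $\underline{y}$ does not decrease $d_i$ by property (3) together with $\underline{y} \preceq z$; and lowering the second disturbance from $\mathbf{w}$ to $\underline{w}$ does not decrease $d_i$ by the second half of property (4) together with $\underline{w} \preceq \mathbf{w}$. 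Chaining these four steps yields $\dot{z}_i(\tau^*) = d_i(z,\, \mathbf{w},\, z,\, \mathbf{w}) \leq d_i(\overline{y},\, \overline{w},\, \underline{y},\, \underline{w}) = \dot{\overline{y}}_i(\tau^*)$, and the lower bound follows from the mirror-image chain. Invoking the comparison theorem then upgrades these infinitesimal conditions to $\underline{y}(\tau) \preceq z(\tau) \preceq \overline{y}(\tau)$ on all of $[0,\,t]$, which completes the reduction.

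I expect the main obstacle to be twofold. First, the time-varying disturbance $\mathbf{w}(\tau)$ must be bounded away using only the extremal constant inputs $\underline{w}$ and $\overline{w}$ that appear in the embedding; this is precisely the role of property (4) together with $\W = [\underline{w},\, \overline{w}]$, and it is the step that makes the disturbance-free embedding a valid over-approximation. Second, passing from the touching-point (quasimonotonicity) condition to the global ordering requires care: the inequalities are only weak and $d$ is merely locally Lipschitz rather than $C^1$, so the differential conditions must be read where the partials exist and the comparison argument invoked in the form applicable to cooperative systems with such regularity. Conceptually, the two middle paragraphs together assert that the embedding function $E$ is monotone with respect to the southeast order $\preceq_{\rm SE}$, and the final containment is then a restatement of \eqref{eq:order_to_box}.
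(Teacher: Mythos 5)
Your proposal is correct and takes essentially the same route as the proof the paper relies on (the paper does not reprove Proposition~\ref{prop:p1} itself but cites \cite[Appendix B1]{TIRA}): a Kamke/comparison-theorem argument showing the trajectory stays componentwise between the two halves of the embedding trajectory, with the touching-point quasimonotonicity condition verified by chaining the four properties of Definition~\ref{def1}, and property~(4) absorbing the time-varying disturbance into the constant extremal inputs $\underline{w},\,\overline{w}$. Your closing observation that this is equivalent to southeast-order monotonicity of the embedding function, combined with \eqref{eq:order_to_box}, is exactly the standard reading of that argument.
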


The proof of Proposition \ref{prop:p1} appears in \cite[Appendix B1]{TIRA} and in an extended version of \cite{Invariance4MM}.

The application of Proposition \ref{prop:p1} is known to provide conservative estimates of reachable sets, and it is natural to wonder whether fidelity can be improved. In the following section, we discuss the three main ways that conservatism enters the approach, and in the later sections we study methods for reducing this approximation conservatism. 

The mixed-monotonicity property generalises the monotonicity of dynamical systems, for which trajectories maintain a partial order over states.

\begin{definition}\label{def:monotone}
The system \eqref{sys} is a \emph{monotone} dynamical system if
\begin{enumerate}
    \item For all $i,\, j \in \{1,\, \cdots,\, n\}$, with $i \neq j$, $\tfrac{\partial F_i}{\partial x_j}(x,\, w) \geq 0$ holds for all $x \in \X$ and all $w \in \W$.
    \item For all $i\in \{1,\, \cdots,\, n\}$ and all $j\in \{1,\, \cdots,\, m\},$ $\tfrac{\partial F_i}{\partial w_j}(x,\, w) \geq 0$ holds for all $x \in \X$ and all $w \in \W$. \qedhere
\end{enumerate}
\end{definition}

When \eqref{sys} is monotone, \eqref{sys} is mixed-monotone with respect to $d(x,\, w,\, \widehat{x},\, \widehat{w}) = F(x,\, w)$, and this decomposition function yields the tightest hyperrectangle containing $R(t;\, \X_0)$ when used with Proposition \ref{prop:p1}; that is, $\rect{\Phi^{E}( t;\, (\underline{x},\, \overline{x}))}$ contains $R(t;\, [\underline{x},\, \overline{x}])$ and no proper hyperrectangular subset of $\rect{\Phi^{E}( t;\, (\underline{x},\, \overline{x}))}$ contains $R(t;\, [\underline{x},\, \overline{x}])$.

\section{Discussion on Conservatism in the Method}
As discussed previously, the application of Proposition \ref{prop:p1} is known to generally result in conservative reachable set approximations, and  conservatism enters the method in three main ways: (a) when a \emph{non-tight} decomposition function is employed, (b) when the decomposition function $d$ varies quickly from the vector field $F$, and (c) when the initial set $\X_0$ is poorly approximated by a hyperrectangle.

Generally, a mixed-monotone system, as in \eqref{sys}, will be mixed-monotone with respect to many decomposition functions, although certain decomposition functions will provide tighter approximations of reachable sets than others when used with Proposition \ref{prop:p1}.  Thus, given a mixed-monotone system, and perhaps several decomposition functions for that system, it is natural to wonder which decomposition function is preferable in application. We use the term \emph{type-(a) conservatism} to denote the approximation conservatism added when a poor decomposition function is employed with Proposition \ref{prop:p1}. Type-(a) conservatism was recently explored in \cite{abate2020tight} and we provide additional analysis in Section \ref{Sec5}.  In particular, we recall that every mixed-monotone system induces a unique \emph{tight decomposition function} that provides a tighter approximation of reachable sets than any other decomposition function for \eqref{sys} when used with Proposition \ref{prop:p1}.  This decomposition function is defined as an optimization problem and thus may not always be practically computable.  To that end, we show additionally how several, perhaps non-tight, decomposition functions for \eqref{sys} can be {combined} to form a new decomposition function for \eqref{sys} that, when used with Proposition \ref{prop:p1}, provides tighter approximations of reachable sets than are attainable by employing each initial decomposition function separately with Proposition \ref{prop:p1} and then forming a reachable set approximation as the intersection of the approximations derived from each.

Even when a tight decomposition function is used with Proposition \ref{prop:p1}, the derived reachable set approximation may still be overly conservative.  Specifically, employing a tight decomposition function does not guarantee that no proper hyperrectangular subset of $\rect{\Phi^{E}( t;\, (\underline{x},\, \overline{x}))}$ contains $R(t;\, [\underline{x},\, \overline{x}])$, in contrast to the case when the dynamics are monotone.  This form of conservatism occurs when the decomposition function $d$ varies quickly from the vector field $F$, and we hereafter refer to this form of conservatism as \emph{type-(b) conservatism}.  In Section \ref{Sec7}, we address type-(b) conservatism, and we show how it is mitigated by considering alternate partial orders on $\X$; that is, we show that \eqref{sys} may be monotone with respect to a different partial order than that considered in Definition \ref{def:monotone} and, in this case, a parallelotope approximation of $R(t;\, \X_0)$ can be derived such that no proper parallelotope subset of this approximation contains $R(t;\, \X_0)$.  

Lastly, conservatism can enter the method when $\X_0$ is poorly approximated by a hyperrectangle.  While the hypothesis of Proposition \ref{prop:p1} assumes a hyperrectangular set of interest $\X_0$, the basic procedure holds for different set geometries by over-approximating the initial set with a hyperrectangle; in particular, if $\X_0 \subset [\underline{x}, \overline{x}]$ for some $\underline{x} \preceq \overline{x}$, then $R(t;\, \X_0) \subseteq \rect{\Phi^{E}( t;\, (\underline{x},\, \overline{x}))}$.  However, if $[\underline{x}, \overline{x}]$ poorly approximates $\X_0$, then $\rect{\Phi^{E}( t;\, (\underline{x},\, \overline{x}))}$ will poorly approximate $R(t;\, \X_0)$, and this approximation conservatism is refereed to as \emph{type-(c) conservatism}. In Sections \ref{Sec7} and \ref{Sec8}, we show how alternative partial orders on $\X$, as in those discussed previously, allow for ways of reducing type-(c) conservatism.

It is important to note that reachable set approximations derived from Proposition \ref{prop:p1} may be conservative, even when types-(a), (b) and (c) conservatism are absent.  That is, even when a tight decomposition function is used, the system \eqref{sys} is monotone, and $\X_0$ is hyperrectangular, one will generally find that $R(t;\, \X_0) \neq \rect{\Phi^{E}( t;\, (\underline{x},\, \overline{x}))}$.  This approximation conservatism, referred to hereafter as \emph{type-(d) conservatism}, is inherent in Proposition \ref{prop:p1} and cannot be mitigated using the theory discussed thus far.  We address type-(d) conservatism in Section \ref{Sec7}; we observe, in particular,  that $R(t;\, \X_0)$ is constrained to the intersection of several independent approximations derived from related systems to \eqref{embedding}, and we show through example how forming reachable set approximations in this way mitigates type-(d) conservatism.

We summarise the proceeding discussion in the following remark. 
\begin{remark}
Four main forms of conservatism arise in the application of Proposition \ref{prop:p1}:
\begin{itemize}
    \item[(a)] Type-(a) conservatism occurs when a non-tight decomposition function is used. 
    \item[(b)] Type-(b) conservatism occurs when the decomposition function $d$ varies quickly from the vector field $F$.
    \item[(c)] Type-(c) conservatism occurs when the initial set $\X_0$ is poorly approximated by a hyperrectangle.
    \item[(d)] Type-(d) conservatism is inherent to Proposition \ref{prop:p1}, and cannot be mitigated using the theory discussed thus far.
\end{itemize}
Depending on the structure of one's specific system, decomposition function, and initial set, these forms of conservatism can each occur independently of one another.  In the following Sections we study each form of conservatism, and show how approximations of reachable sets for nonlinear systems can be improved using the theory of mixed-monotonicity.
\end{remark}

\section{Reducing Conservatism via Decomposition Function Analysis}\label{Sec5}
Addressing type-(a) conservatism caused by a poor choice of decomposition function for \eqref{sys} requires constructing an alternative decomposition function for the same system. This issue was recently explored in \cite{abate2020tight} where it is shown that all systems of the form \eqref{sys} are mixed-monotone with respect to a decomposition function $d$ defined element-wise by
\begin{multline}\label{opt_decomp}
    d_i(x,\, w,\, \widehat{x},\, \widehat{w}) = \\
    \begin{cases}
    \min\limits_{
    \substack{
        y \in  [x,\, \widehat{x}]\\
        y_i = x_i \\
        z\in [w,\, \widehat{w}]
    }
    }
    F_i(y,\, z) & \text{if $(x,\,w) \preceq (\widehat{x},\, \widehat{w})$} \\
    \vspace{-.3cm}
    \\
    \max\limits_{
    \substack{ y \in  [\widehat{x},\, x]\\
               y_i = x_i \\
               z \in [\widehat{w},\, w]
               }
    } 
    F_i(y,\, z) & \text{if $(\widehat{x},\, \widehat{w}) \preceq (x,\, w)$}.
    \end{cases}
\end{multline}
We refer to $d$ constructed  in \eqref{opt_decomp} as the tight decomposition function for \eqref{sys} and, importantly, $d$ provides a tighter approximation of reachable sets than any other decomposition function for \eqref{sys} when used with Propositions \ref{prop:p1} \cite{abate2020tight}.  Thus, applying Proposition \ref{prop:p1} with \eqref{opt_decomp} ensures that the procedure does not suffer from type-(a) conservatism. See also \cite{ozaytight} for a discrete time analogue of \eqref{opt_decomp}.

The paper \cite{abate2020tight} shows how, in certain instances, a tight decomposition function for \eqref{sys} is attainable in closed form.  However, generally, the application of \eqref{opt_decomp} is prevented by its construction as an optimization problem. For this reason, computing alternative decomposition functions for \eqref{sys} via other means can be useful; see  \cite{7799445, SuffMM, TIRA, LTLandMM} for an algorithm to generate decomposition functions for systems with uniformly bounded Jacobian matrices, and see also \cite{Invariance4MM} for an algorithm to generate decomposition functions for systems defined by polynomial vector fields.  These algorithms, however, have no tightness guarantees when \eqref{sys} is not monotone.

Our first result is to show how two initial, perhaps non-tight, decomposition functions for a given system can be combined in a piecewise fashion to create a new decomposition function for the same system that approximates reachable sets with greater accuracy than either of its components.

\begin{proposition}\label{prop:p2}
Let \eqref{sys} be mixed-monotone with respect to both $d^1$ and $d^2$.  Then \eqref{sys} is mixed-monotone with respect to $d$ defined element-wise by
\begin{multline}\label{comb_decomp}
    d_i(x,\, w,\, \widehat{x},\, \widehat{w}) = \\
    \begin{cases}
    \max\{d_i^1(x,\, w,\, \widehat{x},\, \widehat{w}), d_i^2(x,\, w,\, \widehat{x},\, \widehat{w})\} \\ 
    \hspace{4cm} \text{if $(x,\, w) \preceq (\widehat{x},\, \widehat{w}),$ } \\ 
    \min\{d_i^1(x,\, w,\, \widehat{x},\, \widehat{w}), d_i^2(x,\, w,\, \widehat{x},\, \widehat{w})\} \\ 
    \hspace{4cm} \text{if $(\widehat{x},\, \widehat{w}) \preceq(x,\, w).$ } \\ 
    \end{cases}
\end{multline}
Moreover, denoting by $E,\, E^1,\, E^2$ the embedding functions relative to $d,\, d^1,\, d^2$, respectively, we have that 
\begin{equation}\label{intersectioneq}
    \rect{\Phi^E(t;\, (\underline{x},\, \overline{x}))} \subseteq 
    \rect{\Phi^{E^1}(t;\, (\underline{x},\, \overline{x}))}
    \cap
    \rect{\Phi^{E^2}(t;\, (\underline{x},\, \overline{x}))}
\end{equation}
for all $t \geq 0$ and all $\underline{x}\preceq\overline{x}$.
\end{proposition}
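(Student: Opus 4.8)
The plan is to prove the two assertions in turn: first that $d$ in \eqref{comb_decomp} is a valid decomposition function for \eqref{sys}, and then that its embedding flow is sandwiched below, in the southeast order, the flows of $E^1$ and $E^2$. For the first part I would verify the four conditions of Definition \ref{def1}. Condition (1) is immediate: on the diagonal $(x,\,w) = (\widehat{x},\, \widehat{w})$ both branches of \eqref{comb_decomp} reduce to $\max\{F,\,F\} = \min\{F,\,F\} = F$, using that $d^1(x,\,w,\,x,\,w) = d^2(x,\,w,\,x,\,w) = F(x,\,w)$. For conditions (2)--(4), the key observation is that $d^1$ and $d^2$ are locally Lipschitz, hence so are their pointwise maxima and minima, so $d$ is differentiable almost everywhere; and wherever a pointwise maximum (respectively minimum) of two differentiable functions is itself differentiable, its gradient coincides with that of whichever function attains the extremum, and with their common gradient at ties. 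Consequently, at every point of the ordered region where the relevant partial derivative of $d_i$ exists, that derivative equals the corresponding partial derivative of $d_i^1$ or of $d_i^2$, each of which already satisfies the required sign by hypothesis. Since the two branches of \eqref{comb_decomp} are supported on the cones $\{(x,\,w)\preceq(\widehat{x},\,\widehat{w})\}$ and $\{(\widehat{x},\,\widehat{w})\preceq(x,\,w)\}$, which meet only on the diagonal, this establishes the sign conditions throughout the ordered region.

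For the inclusion \eqref{intersectioneq}, I would first reduce set containment to the southeast order. Along any embedding trajectory the relation $x \preceq \widehat{x}$ is preserved, so $\rect{\Phi^E(t;\,(\underline{x},\,\overline{x}))}$, $\rect{\Phi^{E^1}(t;\,(\underline{x},\,\overline{x}))}$, and $\rect{\Phi^{E^2}(t;\,(\underline{x},\,\overline{x}))}$ are all genuine hyperrectangles, and by \eqref{eq:order_to_box} it suffices to show $\Phi^{E^j}(t;\,(\underline{x},\,\overline{x})) \preceq_{\rm SE} \Phi^E(t;\,(\underline{x},\,\overline{x}))$ for each $j \in \{1,\,2\}$. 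The next step is a pointwise comparison of the vector fields. Fixing $j$ and any $(x,\,\widehat{x})$ with $x \preceq \widehat{x}$, the first block of $E$ is $d(x,\,\underline{w},\,\widehat{x},\,\overline{w}) = \max\{d^1,\,d^2\} \succeq d^j(x,\,\underline{w},\,\widehat{x},\,\overline{w})$, while the second block is $d(\widehat{x},\,\overline{w},\,x,\,\underline{w}) = \min\{d^1,\,d^2\} \preceq d^j(\widehat{x},\,\overline{w},\,x,\,\underline{w})$; here $(x,\,\underline{w})\preceq(\widehat{x},\,\overline{w})$ because $\underline{w} \preceq \overline{w}$, which is exactly why the $\max$ and $\min$ branches are the ones selected. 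Reading these two componentwise inequalities through the definition of $\preceq_{\rm SE}$ gives precisely $E^j(x,\,\widehat{x}) \preceq_{\rm SE} E(x,\,\widehat{x})$.

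Finally, I would invoke the monotonicity of the embedding system with respect to $\preceq_{\rm SE}$ --- the same property that underlies Proposition \ref{prop:p1} --- together with a standard monotone comparison (Kamke--M\"uller) argument: since $\Phi^E$ is a monotone flow for $\preceq_{\rm SE}$ and the field $E$ dominates $E^j$ pointwise in this order, the two solutions issuing from the common initial state $(\underline{x},\,\overline{x})$ satisfy $\Phi^{E^j}(t;\,(\underline{x},\,\overline{x})) \preceq_{\rm SE} \Phi^E(t;\,(\underline{x},\,\overline{x}))$ for all $t \geq 0$. Translating back through \eqref{eq:order_to_box} yields $\rect{\Phi^E} \subseteq \rect{\Phi^{E^j}}$ for each $j$, and intersecting over $j \in \{1,\,2\}$ gives \eqref{intersectioneq}. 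I expect the main obstacle to be the first part: making the $\max$/$\min$ gradient-selection argument rigorous at the non-differentiability locus of $d$ and at the boundary of the ordered cones, so that conditions (2)--(4) are checked exactly on the set where the relevant derivatives exist, rather than merely generically.
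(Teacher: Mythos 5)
Your proposal follows the paper's proof essentially step for step: verify the conditions of Definition \ref{def1} for the max/min combination, compare the embedding vector fields pointwise in the order $\preceq_{\rm SE}$, propagate that comparison to the flows, and convert to set containment via forward invariance of $\{(x,\widehat{x}) : x \preceq \widehat{x}\}$ together with \eqref{eq:order_to_box}. The only real difference is in conditions 2--4, where you labor over an almost-everywhere gradient-selection argument; the paper instead simply observes that $d^1$ and $d^2$ are increasing in their first two arguments and decreasing in their last two, a monotonicity that is preserved by pointwise max/min and immediately yields the required derivative signs wherever they exist, thereby sidestepping the differentiability technicalities you flag as the main obstacle.
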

\begin{proof}
We first show that $d$ from \eqref{comb_decomp} is a decomposition function for \eqref{sys}. Since $d^1$ and $d^2$ are decomposition functions for \eqref{sys}, 
$d_i^1(x,\, w,\, x,\, w) = d_i^2(x,\, w,\, x,\, w) = F_i(x,\, w)$ for all 
$i\in \{1,\, \cdots,\,n\}$, all $x\in \X$, and all $w \in \W$, and therefore $d$ satisfies the first condition in Definition \ref{def1}. Note also that $d^1$ and $d^2$ are both increasing in their first two arguments and decreasing in their second two arguments; thus $d$ satisfies the conditions 2, 3 and 4 from Definition \ref{def1}. Therefore, the system \eqref{sys} is mixed-monotone with respect to $d$.

We show that \eqref{intersectioneq} holds, by showing that $\rect{\Phi^E(t;\, (\underline{x},\, \overline{x}))} \subseteq \rect{\Phi^{E^1}(t;\, (\underline{x},\, \overline{x}))}$ for all $t\geq 0,$ and $\rect{\Phi^E(t;\, (\underline{x},\, \overline{x}))} \subseteq \rect{\Phi^{E^2}(t;\, (\underline{x},\, \overline{x}))}$ follows from a reflexive argument. The construction \eqref{comb_decomp} implies 
\begin{equation}
    d(x,\, w,\, \widehat{x},\, \widehat{w}) \succeq d^1(x,\, w,\, \widehat{x},\, \widehat{w})
\end{equation}
when $(x,\, w) \preceq (\widehat{x},\, \widehat{w}),$ and 
\begin{equation}
    d(x,\, w,\, \widehat{x},\, \widehat{w}) \preceq d^1(x,\, w,\, \widehat{x},\, \widehat{w})
\end{equation}
when $(\widehat{x},\, \widehat{w}) \preceq(x,\, w).$ Thus, for all $\underline{x} \preceq \overline{x}$,
\begin{equation}
    E^1(\underline{x},\, \overline{x}) \preceq_{\rm SE} E(\underline{x},\, \overline{x}),
\end{equation}
and therefore 
\begin{equation}
    \Phi^{E^1}(t;\, (\underline{x},\, \overline{x})) \preceq_{\rm SE} \Phi^E(t;\, (\underline{x},\, \overline{x}))
\end{equation}
holds for all $t \geq 0$.  For any embedding function, as in $E^1,\, E$, the space $\{(x,\, \widehat{x}) \in \R^n\times R^n \, \vert\, x \preceq \widehat{x}\}$ is forward invariant for \eqref{embedding} \cite{Invariance4MM}, and therefore we have $\rect{\Phi^E(t;\, (\underline{x},\, \overline{x}))} \subseteq \rect{\Phi^{E^1}(t;\, (\underline{x},\, \overline{x}))}.$ Therefore,  $E^1,$ $E^2$, $E$ satisfy \eqref{intersectioneq}.  This completes the proof.
\end{proof}

Proposition \ref{prop:p2} shows how multiple decomposition functions for \eqref{sys} are combined to construct a new decomposition function for \eqref{sys}; this new decomposition function, when used with Proposition \ref{prop:p1}, provides tighter approximations of reachable sets than are attainable by, in particular, employing both initial decomposition functions and forming a reachable set approximation as the intersection of the approximation derived from each function.  Thus, employing $d$ from \eqref{comb_decomp} reduces type-(a) conservatism in the method.  Note that when either $d^1$ or $d^2$ is a tight decomposition function for \eqref{sys}, $d$ from \eqref{comb_decomp} will always resolve to the tight decomposition function \eqref{opt_decomp}. Thus the application of Proposition \ref{prop:p2} is beneficial only when non-tight decomposition functions for \eqref{sys} are known, and this fact is intuitive as only non-tight decomposition functions are subject to type-(a) conservatism.
Moreover, this method for reducing conservatism is particularly useful, when, say, $d^1$ and $d^2$ are both derived using the Jacobian bound approach appearing in \cite{Invariance4MM, TIRA}; applying this approach can produce multiple distinct decomposition functions for the same system and these decomposition functions can be combined using Proposition \ref{prop:p2} to allow for added fidelity.

While employing $d$ from \eqref{comb_decomp} reduces type-(a) conservatism in the method, this approach is still susceptible to types-(b), (c) and (d) conservatism.
In Section \ref{Sec7}, we take a different approach, and show how types-(b), (c) and (d) conservatism can be reduced when multiple partial orders are considered on $\X$.

\section{Applying the Tools of Mixed-Monotonicity with Alternate Partial Orders}\label{Sec6}
In this section, we show how the tools of mixed-monotonicity, which are traditionally employed with the standard order $\preceq$,  extend to alternate partial orders in a similar way.

Consider the state transformation of \eqref{sys} formed by taking a linear transformation on the state-space
\begin{equation}\label{transform}
    y = T^{-1}x
\end{equation}
where $x\in \X$ is the state of \eqref{sys} and where $T \in \R^{n\times n}$ is a nonsingular transformation matrix.  Under the transformation \eqref{transform}, the transformed dynamics of $y$ become 
\begin{equation}\label{trans_sys}
    \dot{y} = T^{-1}F(Ty,\, w) := F_{T}(y,\, w)
\end{equation}
with state $y \in \Y = \{T^{-1}x \;\vert\; x \in \X\}$ and disturbance input $w \in \W$. Further, the systems \eqref{sys} and \eqref{trans_sys} are related in the following way: for all $x \in \X$, all $t\geq 0$ and all piecewise continuous $\mathbf{w}: [0,\, t] \rightarrow \W$, we have $\Phi(t;\, x,\, \mathbf{w}) = T \Psi(t;\, T^{-1}x,\, \mathbf{w})$, where $\Psi$ denotes the state transition function for \eqref{trans_sys}.

We show next how a decomposition function for \eqref{trans_sys} enables the approximation of forward reachable sets for \eqref{sys}.

\begin{theorem}\label{thrm1} 
For some nonsingular $T \in \R^{n \times n}$, let \eqref{trans_sys} be mixed-monotone with respect to $d$ and let $\X_0 = [\underline{y},\, \overline{y}]_T\subseteq \mathcal{X}$ for some $\underline{y} \preceq \overline{y}$. Then $R(t;\, \X_0) \subseteq \rect{\Phi^{E}( t;\, (\underline{y},\, \overline{y}))}_T$, where $R(t;\, \X_0)$ denotes the reachable set of the original dynamics \eqref{sys} as defined in \eqref{forward_reach} and $\Phi^E$ denotes the flow of the embedding system constructed from $d$ as defined in \eqref{embedding}.
\end{theorem}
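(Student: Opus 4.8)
The plan is to reduce Theorem \ref{thrm1} to Proposition \ref{prop:p1} applied to the transformed system \eqref{trans_sys}, and then to push the resulting hyperrectangle approximation back through the transformation $T$ to recover a parallelotope approximation for the original dynamics. The two enabling facts are (i) the linear conjugacy $\Phi(t;\, x,\, \mathbf{w}) = T\Psi(t;\, T^{-1}x,\, \mathbf{w})$ already established just after \eqref{trans_sys}, and (ii) the observation that the parallelotope $\X_0 = [\underline{y},\, \overline{y}]_T$ is, by the definition of $[\cdot,\,\cdot]_T$, precisely the preimage under $T^{-1}$ of the hyperrectangle $[\underline{y},\, \overline{y}]$.

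First I would denote by $R_T(t;\, \Y_0)$ the time-$t$ forward reachable set of the transformed system \eqref{trans_sys} from an initial set $\Y_0 \subseteq \Y$, defined exactly as in \eqref{forward_reach} but with $\Psi$ in place of $\Phi$. Setting $\Y_0 := T^{-1}\X_0$, the definition of the parallelotope gives $\Y_0 = [\underline{y},\, \overline{y}]$, a genuine hyperrectangle in the transformed state-space, and $\Y_0 \subseteq \Y$ follows from $\X_0 \subseteq \X$. Since \eqref{trans_sys} has the same form as \eqref{sys} and is assumed mixed-monotone with respect to $d$, Proposition \ref{prop:p1} applies verbatim to \eqref{trans_sys} and yields $R_T(t;\, \Y_0) \subseteq \rect{\Phi^E(t;\, (\underline{y},\, \overline{y}))}$.

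Next I would relate the two reachable sets through $T$. Using the conjugacy relation together with the change of variables $y = T^{-1}x$ in \eqref{forward_reach}, every trajectory $\Phi(t;\, x,\, \mathbf{w}) = T\Psi(t;\, y,\, \mathbf{w})$ with $x \in \X_0$ corresponds bijectively to a trajectory of \eqref{trans_sys} started at $y \in \Y_0$, so that $R(t;\, \X_0) = T\, R_T(t;\, \Y_0)$, where $T\mathcal{A} := \{Tz \mid z \in \mathcal{A}\}$. Applying $T$ to the inclusion from the previous step gives $R(t;\, \X_0) \subseteq T\rect{\Phi^E(t;\, (\underline{y},\, \overline{y}))}$. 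Finally, for any hyperrectangle $[\underline{z},\, \overline{z}]$ one has $T[\underline{z},\, \overline{z}] = \{v \mid T^{-1}v \in [\underline{z},\, \overline{z}]\} = [\underline{z},\, \overline{z}]_T$, so the right-hand side equals $\rect{\Phi^E(t;\, (\underline{y},\, \overline{y}))}_T$, which is the claimed conclusion.

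I expect the only real subtlety to be careful bookkeeping rather than a genuine analytic obstacle: one must track which state-space each object inhabits and verify that the set-image $T\mathcal{A}$ is compatible with the definitions of $[\cdot,\,\cdot]_T$ and $\rect{\cdot}_T$. Nonsingularity of $T$ is used in two places, namely to make the change of variables a bijection (so that $R(t;\, \X_0) = T\,R_T(t;\, \Y_0)$ is an equality and not merely an inclusion) and to ensure $[\cdot,\,\cdot]_T$ is well-defined. No new reasoning about the flow itself is required, since the conjugacy relation and Proposition \ref{prop:p1} together supply everything needed.
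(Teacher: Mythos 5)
Your proof is correct and takes essentially the same approach as the paper's: both apply Proposition \ref{prop:p1} to the transformed system \eqref{trans_sys} over the hyperrectangle $[\underline{y},\,\overline{y}]$ and then pull the resulting approximation back through the conjugacy $\Phi(t;\,x,\,\mathbf{w}) = T\Psi(t;\,T^{-1}x,\,\mathbf{w})$ and the definition of $[\cdot,\,\cdot]_T$. The only distinction is that you argue at the level of sets (via $R(t;\,\X_0) = T\,R_T(t;\,T^{-1}\X_0)$) whereas the paper argues pointwise for an arbitrary $x\in\X_0$, which is a cosmetic rather than methodological difference.
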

\begin{proof}
Choose $x \in \X_0$, then $T^{-1}x \in  [\underline{y},\, \overline{y}]$.  For all $\mathbf{w}$, Proposition \ref{prop:p1} implies that $\Psi(t;\, y,\, \mathbf{w}) \in \rect{\Phi^{E}( t;\, (\underline{y},\, \overline{y}))}$ for all $y \in [\underline{y},\, \overline{y}]$ and all $t \geq 0$, and therefore $\Psi(t;\, T^{-1}x,\, \mathbf{w}) \in \rect{\Phi^{E}( t;\, (\underline{y},\, \overline{y}))}$ holds.  Moreover, $\Phi(t;\, x,\, \mathbf{w}) = T \Psi(t;\, T^{-1}x,\, \mathbf{w}) \in \rect{\Phi^{E}( t;\, (\underline{y},\, \overline{y}))}_T$.
\end{proof}

Theorem \ref{thrm1} extends the applicability Proposition \ref{prop:p1} to the case of parallelotope initial sets $\X_0$, and the approximations derived from the application of Theorem \ref{thrm1} will also be parallelotopes.  As such, the results of Theorem \ref{thrm1} subsume those of Proposition \ref{prop:p1} as a special case by taking $T = I_n$ where $I_n$ is the $n\times n$ identity matrix.
We demonstrate the application of Theorem \ref{thrm1} in the following example.

\begin{example}\label{Example1}
Consider the system
\begin{equation}\label{EX1_sys}
\begin{bmatrix}
\dot{x}_1\\
\dot{x}_2
\end{bmatrix}
=
F(x,\, w)
=
\begin{bmatrix}
x_1 x_2 + w\\
x_1 + 1
\end{bmatrix}
\end{equation}
with state-space $\X = \R^2$ and disturbance-space $\W = [0,\, 1/4]$.  We assume a parallelotope set of initial conditions $\X_0 = [\underline{y},\, \overline{y}]_T$ for 
\begin{equation}\label{EX1_params}
    \underline{y} = 
    \begin{bmatrix}
    0 \\ -1/4
    \end{bmatrix}, \quad
    \overline{y} = 
    \begin{bmatrix}
    1/4 \\ 0
    \end{bmatrix}, \quad
    T = \begin{bmatrix}
    1 & -2 \\
    1 & 1
    \end{bmatrix}
\end{equation}
and we aim to approximate $R(1;\, \X_0)$ using Theorem \ref{thrm1}.

A decomposition function for \eqref{trans_sys} is formed using the tight construction \eqref{opt_decomp}, and its embedding system is simulated forward in time in order to approximate $R(1,\, \X_0)$.  We solve \eqref{opt_decomp} at each timestep of the simulation using \textsc{fminbnd.m}, a MATLAB optimization tool. We show $\X_0$ graphically in Figure \ref{fig1}, along with $R(1;\, \X_0)$ and its respective approximation as derived in Theorem \ref{thrm1}. 

Note that the approximations derived thus far do not suffer from types-(a) and (c) conservatism; this is due to the fact that $\X_0$ is parallelotopic and we employ a tight decomposition function in the procedure. However, we find that $R(1;\, \X_0) \neq \rect{\Phi^{E}(1;\, [\underline{y},\, \overline{y}])}_T$ and this is due to types-(b) and (d) conservatism.

Additionally, note that the aforementioned procedure for computing parallelotope approximations of forward reachable sets can be extended to approximate backward reachable sets in a similar way.  In particular, in \cite{Invariance4MM}, it is shown how a decomposition function for $\dot{x} = -F(x,\, w)$ is used to compute a hyperrectangular over-approximation of
\begin{multline}\label{backward_reach}
    S(t;\, \X_0) = \{x \in \X \:\vert\: \Phi(t;\, x,\, \mathbf{w}) \in \X_0\\ \text{ for some } \mathbf{w} : [0,\, t] \rightarrow \W\},
\end{multline}
and we observe the same technique can be employed with $\dot{y} = - F_T(y,\, w)$.  An example is shown in Figure \ref{fig1} where we compute a parallelogram approximation of $S(1;\, \X_0)$ using a tight decomposition function for $\dot{y} = -F_T(y,\, w)$. 
\end{example}

\begin{figure}[t]
    \hspace{.6cm}
    \input{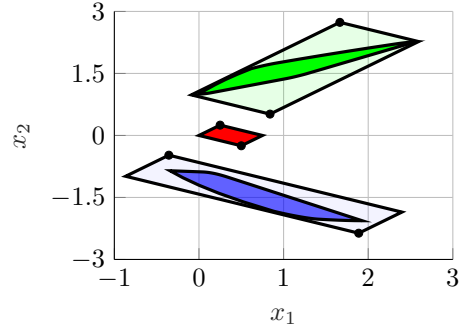}
    \caption{Example \ref{Example1}. $\X_0$ is shown in red.  $R(1;\, \X_0)$ is shown in green, with a parallelogram over-approximation shown in light green. $S(1;\, \X_0)$ is shown in blue, with a parallelogram over-approximation shown in light blue. Note that $R(1;\, \X_0) \neq \rect{\Phi^{E}(1;\, [\underline{y},\, \overline{y}])}_T$ and this is due to types-(b) and (d) conservatism in the method.
    }
    \label{fig1}
\end{figure}

Note that Theorem \ref{thrm1} induces an analogous notion of conservatism to that of Proposition \ref{prop:p1}.  That is, one may not have access to a tight decomposition function for \eqref{trans_sys}, and in this case the application of Theorem \ref{thrm1} is subject to type-(a) conservatism. In addition, when $T$ is chosen poorly, the system \eqref{trans_sys} may only induce decomposition functions which vary quickly from $F_y$,  and in this case the application of Theorem \ref{thrm1} is subject to type-(b) conservatism. Last, the set of interest $\X_0$ may be poorly approximated by a parallelotope $[\underline{y}, \overline{y}]_T$, and in this case the application of Theorem \ref{thrm1} is subject to type-(c) conservatism.  Note however, that type-(c) conservatism is always mitigated when $\X_0$ is a singleton set.

\begin{figure*}[t!]
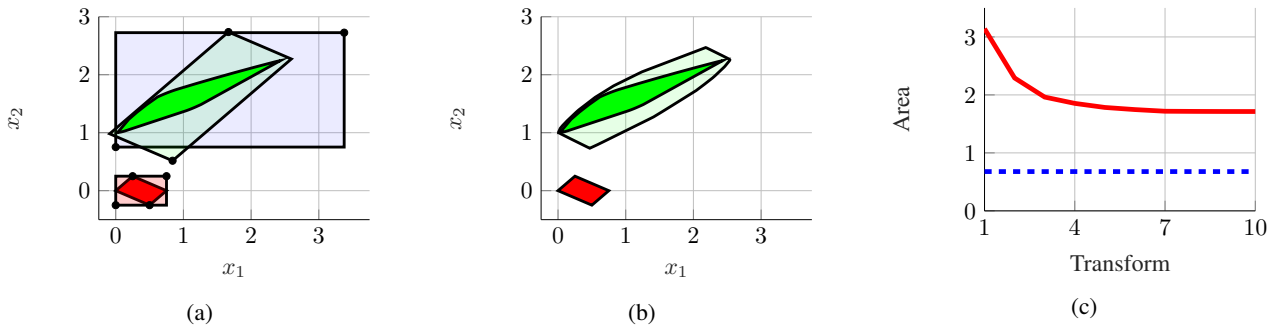

    \begin{subfigure}{0.31\textwidth}
    \scalebox{.9}{
        \input{F2a.tikz}
        }
        \caption{ }
        \label{fig2a}
    \end{subfigure}
    ~
    \begin{subfigure}{0.31\textwidth}
    \scalebox{.9}{
        \input{F2b.tikz}
        }
        \caption{ }
        \label{fig2b}
    \end{subfigure}
    ~
    \begin{subfigure}{0.31\textwidth}
    \scalebox{.9}{
%
%
\begin{tikzpicture}

\begin{axis}[%
width=4cm,  
height=3cm, 
at={(0cm,0cm)},
scale only axis,
xmin=1,
xmax=10,
xlabel style={font=\color{white!15!black}},
xlabel={Transform},
xtick={1, 4, 7, 10},
ymin=0,
ymax=3.5,
ylabel style={font=\color{white!15!black}},
ylabel={Area},
axis background/.style={fill=white},
axis x line*=bottom,
axis y line*=left,
xmajorgrids,
ymajorgrids
]
\addplot [color=red, line width=2.0pt, forget plot]
  table[row sep=crcr]{%
1	3.14145\\
2	2.2919\\
3	1.96235\\
4	1.8537\\
5	1.7836\\
6	1.74675\\
7	1.71825\\
8	1.7154\\
9	1.7139\\
10	1.7134\\
};
\addplot [color=blue, dashed, line width=2.0pt, forget plot]
  table[row sep=crcr]{%
1	0.677682669173582\\
10	0.677682669173582\\
};
\end{axis}
\end{tikzpicture}%
        }
        \caption{ }
        \label{fig2c}
    \end{subfigure}
    \caption{  
    Example \ref{Example2}: Approximating   $R(1;\, \X_0)$ by applying Proposition \ref{prop:p1} and Theorem \ref{thrm1}. 
    (a) Comparing Proposition \ref{prop:p1} to Theorem \ref{thrm1}: $\X_0$ is shown in red.  $R(1;\, \X_0)$ is shown in green, with a parallelogram over-approximation shown in light green.  $\X_1$ is shown in pink, and a rectangular over-approximation of $R(1;\, \X_1)$ shown in blue.
    (b) Increasing fidelity with multiple transformations: $\X_0$ is shown in red.  $R(1;\, \X_0)$ is shown in green.  An approximation of $R(1;\, \X_0)$ is formed by computing the interaction of 10 approximations derived via Theorem \ref{thrm1}.  This approximation is shown in light green. 
    (c) The red line depicts the surface area of the reachable set approximation as a function of the number of transformation used. After all 10 applications of Theorem \ref{thrm1}, the surface area of the resulting approximation is $1.71$.  The blue line depicts the surface area of the true reachable set $R(1; \, \X_0)$, which is equal to $0.67.$
    }
    \label{fig2}
\end{figure*}

\begin{remark}
When the set of interest $\X_0$ is a singleton set, \emph{i.e.} $\X_0 = [x, x]$, then any nonsingular $T \in \R^{n\times n}$ can be used with Theorem \ref{thrm1} without type-(c) conservatism.  This is due to the fact that $\X_0 = [T^{-1}x, T^{-1}x]_T$ as well.
\end{remark}

\section{Reducing Conservatism via the Use of Multiple Partial Orders}\label{Sec7}
The main focus of this paper is to discover means of improving fidelity in the approximations of reachable sets for nonlinear systems using the theory of  mixed-monotonicity, and we have shown previously how type-(a) conservatism can be mitigated using an approach based on analyzing decomposition functions. In this section, we take a different approach, and show how multiple partial orders, as in those discussed in Section \ref{Sec6}, can be employed to reduce types-(b), (c) and (d) conservatism.

We first turn our attention to type-(d) conservatism.  As suggested in the previous discussion, a naive approach for deriving tighter approximations of reachable sets is to construct several decomposition functions for \eqref{sys} and then form an approximation of the reachable set of \eqref{sys} as the intersection of the approximations derived from each decomposition function and Proposition \ref{prop:p1}.  This approach is, however, unnecessarily complicated since, by Proposition \ref{prop:p2}, multiple decomposition functions for \eqref{sys} can be combined to form a decomposition function that achieves  approximations of reachable sets at least as tight as that computed via this intersection-based approach. Moreover, the application of \eqref{comb_decomp} is still subject to type-(d) conservatism, as this approximation conservatism is inherent in Proposition \ref{prop:p1}.

Nonetheless, we show in Example \ref{Example2} how type-(d) conservatism is mitigated by applying the results of Section \ref{Sec6}. In particular, we show how a decomposition for \eqref{sys} and a decomposition function for \eqref{trans_sys} are used together to approximate reachable sets with added fidelity.

\begin{example}\label{Example2}
We consider the system \eqref{EX1_sys}, previously studied in Example \ref{Example1}.  We take $T$ and $\X_0 = [\underline{y},\, \overline{y}]_T$ from \eqref{EX1_params} and we aim to approximate $R(1,\, \X_0)$ by applying Theorem \ref{thrm1}.  

As in Example \ref{Example1}, we assume access to the tight decomposition function for \eqref{trans_sys} and $\X_0$ is parallelotopic; thus, Theorem \ref{thrm1} can be employed without types-(a) and (c) conservatism in the approximation. Additionally, types-(b) and (d) conservatism cannot be mitigated by, \emph{e.g.}, computing an alternative decomposition function for \eqref{trans_sys} and then forming an approximation of $R(1,\, \X_0)$ as the intersection of the approximations derived from each decomposition function.  Nonetheless, it is possible to reduce overall conservatism by applying Theorem \ref{thrm1} several times with different transformations, so that $R(1,\, \X_0)$ is constrained to the intersection of each approximation derived.

To demonstrate this assertion, we take $\X_1 = [0,\, 3/4] \times [-1/4,\, 1/4]$, so that $\X_0 \subset \X_1$, and we compute a rectangular over-approximation of $R(1;\, \X_0)$ by applying Proposition \ref{prop:p1} with a decomposition function for \eqref{EX1_sys}. We show $\X_1$ and a rectangular approximation of $R(1;\, \X_0)$ graphically in Figure \ref{fig2a}. Note that applying Proposition \ref{prop:p1}, in this case, is subject to type-(b) conservatism as $\X_0$ is not hyperrectangular and we find that the initial application of Theorem \ref{thrm1} leads to a significantly tighter approximation of $R(1;\, \X_0)$ than is attainable using Proposition \ref{prop:p1}. Nonetheless, fidelity is best improved when both Proposition \ref{prop:p1} and Theorem \ref{thrm1} are employed, so that $R(1,\, \X_0)$ is constrained to lie in the intersection of the approximations derived from each.

To illustrate this point further, we next form an approximation of $R(1;\, \X_0)$ by applying Theorem \ref{thrm1} with 10 different transformations matrices; an approximation of $R(1;\, \X_0)$ is then formed as as the intersection of the approximation derived from each transformation (See Figures \ref{fig2b}--\ref{fig2c}), this approach yields a significantly tighter approximation of $R(1;\, \X_0)$ than the initial application of Theorem \ref{thrm1}.
\end{example}

As demonstrated in Example \ref{Example2}, overall conservatism can be reduced when multiple approximations are derived from the application of Theorem \ref{thrm1} with different partial orders. Each approximation, on its own, contains types-(a), (b), (c) and (d) conservatism, however ultimately fidelity is improved in the approach.

It is important to note that, in certain instances, a transformation $T$ can be chosen so that \eqref{trans_sys} is a monotone system as defined in Definition \ref{def:monotone} and, in this instance, the application of Theorem \ref{thrm1} is devoid of type-(b) conservatism.  In this case a parallelotope approximation of $R(t;\, \X_0)$ can be derived such that no proper parallelotope subset of this approximation contains $R(t;\, \X_0)$.  A demonstration is shown in Example \ref{Example3}.

\begin{example}\label{Example3}
Consider the system
\begin{equation}\label{EX3_sys}
\begin{bmatrix}
\dot{x}_1\\
\dot{x}_2
\end{bmatrix}
=
F(x,\, w)
=
\begin{bmatrix}
x_1 - x_2 + x_2^3 + w\\
x_1 - x_2
\end{bmatrix}
\end{equation}
with state-space $\X = \R^2$ and disturbance-space $\W = [-1,\, 1]$. Under the transformation 
\begin{equation}
    T_1 = \begin{bmatrix}
    1 & 1 \\ 0 & 1
    \end{bmatrix}
\end{equation}
the dynamics of $y$ from \eqref{trans_sys} become 
\begin{equation}\label{monotone_ex}
    \begin{bmatrix}
    \dot{y}_1 \\ \dot{y}_2
    \end{bmatrix}
    = F_{T_1}(y,\, w) = 
    \begin{bmatrix}
    y_2^3 + w \\ y_1
    \end{bmatrix},
\end{equation}
and \eqref{monotone_ex} is a monotone system as defined in Definition \ref{def:monotone}. Thus, the application of Theorem \ref{thrm1} with $T_1$ will not be subject to type-(b) conservatism.  An example is shown in Figure \ref{fig3}, where we compute an over-approximation $R(1;\, x_0)$ with $x_0 = (1,\, 1)$ by applying Theorem \ref{thrm1} with a tight decomposition function for \eqref{trans_sys} with $T_1$.  Note that, in this case, the approximation is not subject to types-(a), (b) and (c) conservatism as a tight decomposition function is employed, $x_0$ is trivially a parallelotope, and \eqref{trans_sys} is monotone.  Nonetheless, the approximation derived from the application of Theorem \ref{thrm1} still contains some conservatism, and this is a result of type-(d) conservatism in the method.

Even though \eqref{EX3_sys} is transformable to a monotone system via $T_1$, fidelity in the approximation can still be improved by applying Theorem \ref{thrm1} again with a different shape matrix. An example is shown in Figure \ref{fig3}, where we compare the approximation derived with $T_1$ to a second approximation derived using 
\begin{equation}
    T_2 = \begin{bmatrix}
    1 & 4 \\ -1 & 1
    \end{bmatrix}.
\end{equation} 
While the approximation derived from $T_2$ hugs the boundary of $R(1;\, x_0)$ less tightly than the approximation derived from $T_1$---this is a result of type-(b) conservatism---we find that employing Theorem \ref{thrm1} with both $T_1$ and $T_2$ yields a the tighter approximation of $R(1;\, x_0)$ than was attainable previously.  Moreover, this approach mitigates type-(d) conservatism in the method. \qedhere
\end{example}

\begin{figure}[t]
    \hspace{.6cm}
    \input{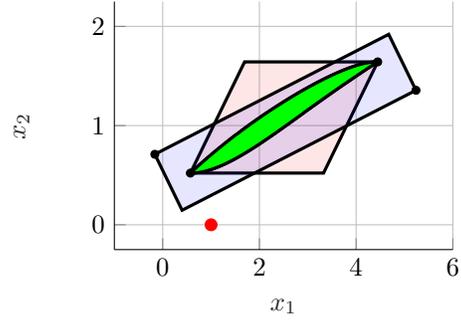}
    \caption{Example \ref{Example3}. $x_0$ is shown in red and $R(1;\, x_0)$ is shown in green. Two parallelogram approximation of $R(1;\, x_0)$ are computed using Theorem \ref{thrm1} with $T_1$ and $T_2$ and these approximations are shown in pink and blue, respectively.  Note that while $T_1$ induces a monotone system, and achieves the tightest parallelogram containing $R(1;\, x_0)$, conservatism is still reduced when repeating the procedure with $T_2$.
    }
    \label{fig3}
\end{figure}

Examples \ref{Example2} and \ref{Example3} demonstrate a new approach for increasing fidelity in the approximations of reachable sets for nonlinear systems using the theory of  mixed-monotonicity.  This approach involves computing a polytope approximation of $R(t;\, \X_0)$ as the \emph{intersection} of several parallelotope approximations derived via Theorem \ref{thrm1}, and we address types-(a), (b), (c) and (d) conservatism individually in discussion. In the next section, we present a numerical example and demonstrate a novel method for reducing type-(c) conservatism when the initial set $\X_0$ is polytopic, and this method forms an approximation of $R(t;\, \X_0)$ \emph{union} of several approximations derived via Theorem \ref{thrm1}.

\section{Numerical Example}\label{Sec8}

Consider the system 
\begin{equation}
    \begin{bmatrix}
    \dot{x}_1 \\
    \dot{x}_2
    \end{bmatrix} = F(x,\, w) = 
    \begin{bmatrix}
    x_2+\sin{(x_2)} + w \\
    x_1 + \cos{(x_1)} + 1
    \end{bmatrix}
\end{equation}
with state-space $\X = \R^2$ and disturbance space $\W = [0,\, 1/2]$, and consider a hexagonal set of initial conditions 
\begin{multline}
    \X_0 = \textbf{Conv}(\{x\in \R^2 \,\vert\, x_1 = 1+ \cos{(\tfrac{i\pi}{3})}, \\
    x_2 = 1+\sin{(\tfrac{i\pi}{3})}, i \in\{ 1,\, \cdots,\, 6\} \})
\end{multline}
where $\textbf{Conv}$ denotes the convex hull function.  We aim to overapproximate $R(1;\, \X_0)$.

One approach for approximating $R(1;\, \X_0)$ is to apply Theorem \ref{thrm1} with a parallelogram  over-approximation of $\X_0$; but this approach---which is taken in Example \ref{Example2}---is subject to type-(c) conservatism. In this study, we instead take an approach whereby $\X_0$ is described exactly as the union of three parallelotopes. An approximation of $R(1;\, \X_0)$ is then computed as the union of three approximations derived via Theorem \ref{thrm1}.

We first describe $\X_0$ as the union of three \emph{disjoint} parallelograms, $\X_0^i = [\underline{y}_i,\, \overline{y}_i]_{T_i}$, for $i\in \{1,\, 2,\, 3\}$, and
\begin{equation*}
    \underline{y}_i = (-1,\, 0) + T_i^{-1} (1,\, 1), \qquad
    \overline{y}_i = (0,\, 1) + T_i^{-1} (1,\, 1)
\end{equation*}
\begin{equation}
    T_i = \begin{bmatrix}
    -\cos{(\tfrac{2 \pi (i-1)}{3})} & \cos{(\tfrac{2 \pi i}{3})} \\
    -\sin{(\tfrac{2 \pi (i-1)}{3})} & \sin{(\tfrac{2 \pi i}{3})}
    \end{bmatrix}.
\end{equation}
Thus, $\X_0 = \cup_{i=1}^3 \X_0^i$, and these parallelograms are disjoint in the sense that $\X_0^1$, $\X_0^2$ and $\X_0^3$ share no common interior points. For each shape matrix $T_i$, a tight decomposition function is formed for \eqref{trans_sys} and the time-1 reachable set of \eqref{trans_sys} is approximated using Theorem \ref{thrm1}.  Then, an approximation $R(1;\, \X_0)$ is formed as the union of the three approximations derived for $R(1;\, \X_0^i)$ with $i\in {1,\, 2,\, 3}$.  We show $\X_0^1$, $\X_0^2$ and $\X_0^3$ graphically in Figure \ref{fig4a} along with their respective reachable set approximations derived in this study. Note that the three approximations derived do not share many common points, and this is a result of the fact that $\X_0^1$, $\X_0^2$ and $\X_0^3$ are chosen to be disjoint.

We next repeat the procedure, and describe $\X_0$ as the union of three \emph{overlapping} parallelograms, as shown in Figure \ref{fig4b}. As was the case previously, a tight decomposition function is formed for each transformed system \eqref{trans_sys} that arises from the shape matrices of these initial parallelograms. For each, the time-1 reachable set of \eqref{trans_sys} is approximated using Theorem \ref{thrm1}, and then $R(1;\, \X_0)$ is approximated as the union of the three approximations derived. Note that the three approximations derived here overlap significantly, and this is a result of the fact that the chosen initial sets overlap. 

\begin{figure}[t]
    \begin{subfigure}{0.48\textwidth}
         \hspace{.6cm}
        \input{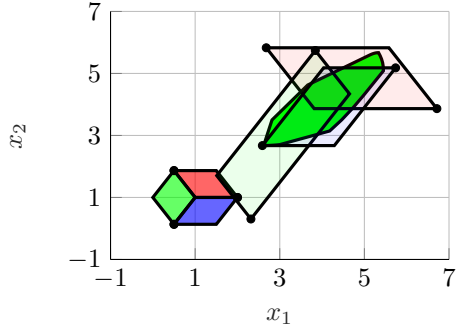}
        \caption{The initial set $\X_0$ is the union of three disjoint parallelograms $\X_0^1$, $\X_0^2$ and $\X_0^3$, shown red, green and blue. The reachable set of each is approximated using theorem \ref{thrm1} and is shown in respective colors.  The true reachable set $R(1;\, \X_0)$ is shown in green.
         \vspace{.5cm}
        }
        \label{fig4a}
    \end{subfigure}
    ~
    \begin{subfigure}{0.48\textwidth}
        \hspace{.6cm}
        \input{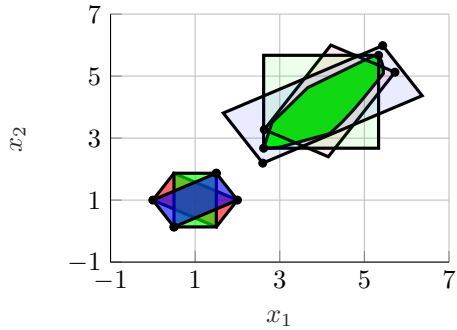}
        \caption{The initial set $\X_0$ is the union of three overlapping parallelograms, shown red, green and blue. The reachable set of each is approximated using theorem \ref{thrm1} and is shown in respective colors.  The true reachable set $R(1;\, \X_0)$ is shown in green.}
		\label{fig4b}
    \end{subfigure}
    \caption{ Numerical Example. Approximating   $R(1;\, \X_0)$ where $\X_0$ is the union of parallelograms. 
    }
    \label{fig4}
\end{figure}

In summary, in this study, we show how reachable sets for initial sets that are not hyperrectangles are approximated using Theorem \ref{thrm1}, and this procedure avoids type-(c) conservatism in the approach.  Note that this procedure is applicable to all systems \eqref{sys} and all polytope initial sets $\X_0$ with hyperrectangular faces.

\section{Conclusion}
This work studies means of improving fidelity in the approximations of reachable sets for nonlinear systems using the theory of  mixed-monotonicity. Four main forms of conservatism are considered, and we show how applying the tools of mixed-monotonicity to a related system, formed via a linear transformation of the initial state-space, is used to reduce this conservatism.

\bibliography{Bibliography}
\bibliographystyle{ieeetr}
\end{document}